\numberwithin{equation}{section}
\newtheorem{mytheo}{Theorem}[section]
\newtheorem{myproposition}[mytheo]{Proposition}
\newtheorem{myremark}[mytheo]{Remark}
\newtheorem{mytheo*}{Theorem}
\newtheorem{mylemma*}{Lemma}
\newtheorem{myproposition*}{Proposition}
\newtheorem{mydef*}{Definition}[section]
\newtheorem{myremark*}{Remark}
\newtheorem*{myproof*}{Sketch of the Proof}
\newtheorem*{mynotation*}{Notation}
\newtheorem{mycorollary*}{Corollary}
\newtheorem*{theorem*}{Theorem}
\DeclareMathOperator{\Imaginary}{Im}
\DeclareMathOperator{\Real}{Re}
\begin{document}
\title{Completeness Relation in Renormalized Quantum Systems}

\author[1]{Fatih Erman}
\author[2, 3]{O. Teoman Turgut}
\affil[1]{Department of Mathematics, \.{I}zmir Institute of Technology, Urla, 35430, \.{I}zmir, Turkey}
\affil[2]{Department of Physics, Bo\u{g}azi\c{c}i University, Bebek, 34342, \.{I}stanbul, Turkey}
\affil[3]{Department of Physics, Carnegie Mellon University, Pittsburgh, PA, United States}

\affil[1]{fatih.erman@gmail.com}
\affil[2]{turgutte@bogazici.edu.tr}

\maketitle

\begin{abstract} In this work, we show that the completeness relation for the eigenvectors, which is an essential assumption of quantum mechanics, remains true if the Hamiltonian, having a discrete spectrum, is modified by a delta potential (to be made precise by a renormalization scheme) supported at a point in two and three-dimensional compact manifolds or Euclidean spaces. The formulation can be easily extended
to $N$ center case, and the case where delta interaction is supported on curves in the plane or space. We finally give an interesting application for sudden perturbation of the support of the delta potential. \end{abstract}

Keywords: Completeness relation, Dirac $\delta$ interactions, point interactions, Green's function, renormalization, Schr\"{o}dinger operators, resolvent, compact manifolds. 
\section{Introduction} \label{introduction}

In quantum mechanics, the energy eigenfunctions — corresponding to both discrete and continuous spectra — constitute a generalized orthonormal basis for the Hilbert space $\mathcal{H}$. This allows any arbitrary state (wave function) to be expanded in terms of these generalized eigenfunctions, a fundamental property known as the completeness relation (or Parseval’s identity for eigenfunctions) \cite{Arnobook, GalindoPascual1, Berezanskii}. There are only a few standard explicit examples in which the completeness relation has been verified. One of the reasons for this is the lack of exactly solvable potentials in quantum mechanics and the integrals or sums involving eigenfunctions are quite hard to evaluate analytically. The momentum operator and the Hamiltonian for a single particle in a box are the most well-known textbook examples \cite{Griffiths, Gasiorowicz}. The completeness relation for systems having both bound states and continuum states, such as the Dirac delta potential in one dimension \cite{Brownstein, Patil, Dalabeeh}, the Coulomb potential in three dimensions \cite{Mukunda}, and the reflectionless potential \cite{reflectionless} have also been demonstrated by appropriately normalizing the eigenfunctions. The purpose of this paper is to show that the completeness relation still holds even for $a$ rather singular system, involving delta function potentials, where the renormalization is required. For this, we consider an Hamiltonian having only a discrete spectrum and assume (justifiably for a self-adjoint Hamiltonian) that the completeness relation holds. Then we prove that the completeness relation is still true even if we modify this Hamiltonian by a delta potential (point interactions in two and three dimensions in an Euclidean space, as well as point interactions in two and three-dimensional compact manifolds), where a renormalization is required to render the Hamiltonian well-defined.

The resolvent of the modified Hamiltonian by singular delta potentials supported by a point $a$ in two or three dimensions has been studied extensively in the literature and given by the Krein's formula \cite{Albeverio2012solvable, AlbeverioKurasov}
\begin{eqnarray}
    R(E)=R_0(E) + (\Phi(E))^{-1} \langle \overline{G_0(\cdot, a|E)}, \cdot \rangle G_0(\cdot, a|E) \;, \label{resolventexplicitformula}
\end{eqnarray}
where $R_0(E)=(H_0-E)^{-1}$ is the resolvent of the Hamiltonian $H_0$ at $E \notin \mathbb{R}$, $G_0(x,y|E)$ is the integral kernel of the resolvent $R_0(E)$ or Green's function, and $\Phi$ is some function to be determined for each particular class of singular potential. This function is also denoted by $\Gamma$ in the mathematics literature. The meaning of the second term should be understood as follows: 
\begin{eqnarray}
    (R(E) \psi)(x)=(R_0(E)\psi)(x) + (\Phi(E))^{-1} \langle \overline{G_0(y, a|E)}, \psi(y) \rangle G_0(x, a|E) \;, 
\end{eqnarray}
where $\langle \overline{G_0(y, a|E)}, \psi(y) \rangle = \int G_0(y, a|E) \psi(y) d \mu(y)$. The formula (\ref{resolventexplicitformula}) can be seen more naturally in Dirac's bra-ket notation, 
\begin{eqnarray}
    R(E) = R_0(E) + (\Phi(E))^{-1} R_0(E) |a \rangle \langle a | R_0(E) \;.
\end{eqnarray}
Looking at the resulting wave functions, some of our colleagues express doubts about the explicit verification of the completeness relations, even though it was clear from the fact that the resulting Hamiltonians are self-adjoint in a precise mathematical sense. Even if the result is expected, we think it is a valuable exercise to demonstrate the orthonormality and completeness by an explicit calculation. 
To make the presentation self-contained, we will briefly summarize how the pole structure of the full Green's function $G(x,y|E)=\langle x |R_0(E)|y\rangle$ is rearranged to form new poles and how the poles of $G_0(x,y|E)$, which explicitly appears as an additive factor in $G(x,y|E)$, are removed in general. This has been proved in our previous work \cite{Annals23} for the more general case, when the Hamiltonian has a discrete as well as a continuous spectrum.

The resulting wave functions are typically given by the original Green's functions $G_0$ evaluated at the new energy eigenvalues, so they are actually (mildly) singular at the location of the delta function. These are interesting objects by themselves and could be useful in some practical problems as well, as they are now (explicitly) shown to form a new orthonormal basis. In the present work, we prefer to emphasize the essential ideas while writing out  our proofs and we are not aiming for a fully  rigorous mathematical approach, in this way,  we hope that, the paper becomes accessible to  a wider audience.

\section{Discrete Spectrum Modified by a  $\delta$ Interactions} \label{section2}

To set the stage, we introduce the notation and summarize the main results about how the spectrum of an initial Hamiltonian $H_0$ having a purely discrete spectrum changes under the influence of a (formally defined) delta interaction,  which was discussed in our previous works, particularly in \cite{Annals23}.

We consider the case in which $H_0$ is {\it formally} modified by a single $\delta$ function supported at $x=a$,
\begin{eqnarray}
    H=H_0 - \alpha \delta_a \;, \label{Hamiltonian}
\end{eqnarray}
where $\alpha$ is to be replaced by a renormalized coupling once we actually state the Green's function for this problem. Various methods exist in literature to make sense of the above formal expression of the Hamiltonian $H$. One possible way is to define the $\delta$ interaction as a self-adjoint extension of $H_0$ and they are in general called point interactions or contact interactions. A modern introduction to this subject is given in the recent book by Gallone and Michelangeli \cite{allesandro} and the classic reference elaborating this point of view is the monograph by Albeverio et. al. \cite{Albeverio2012solvable}.

Here and subsequently, as emphasized in the introduction, we assume that the initial Hamiltonian $H_0$ satisfies some conditions:
\begin{itemize}
    \item $H_0$ is self-adjoint on some dense domain $D(H_0) \subset L^2({\mathcal M})$, where ${\mathcal M}$ is two or three dimensional Euclidean space or Riemannian compact manifold without boundary (connected and orientable additionally). Often, it is essential (to put some estimates on the Green's functions) to assume some regularity on the geometry, experience has shown that a lower bound on the Ricci curvature, which controls the volume growth of geodesic balls, satisfies most of the technical requirements. Consequently, we impose the following condition,
\begin{equation}
    Ric_g(\cdot, \cdot)\geq (D-1)\kappa g(\cdot, \cdot) \;. \label{ricci}
\end{equation}
For two dimensional compact manifolds, this does not impose any restriction, as Ricci curvature is exactly given by 
$Ric_g(\cdot, \cdot)={R\over 2} g(\cdot, \cdot)$, where $R$ is the scalar curvature, and $R$ has a minimum (and a maximum) value on a {\it compact} manifold.  For three dimensional manifolds, this puts some restriction on possible geometric structures one admits.
If $\kappa>0$, one has much better control for various bounds on heat kernels (or Green's functions), see the book by Li \cite{peterLi} for an exposition of these ideas.
  
    \item Spectrum of $H_0$ is discrete $\sigma_d(H_0)$ (set of eigenvalues),
    \item The discrete spectrum has no accumulation point, except possibly at infinity.
    \item For stability, we assume $H_0$ has spectrum  bounded below.
\end{itemize}
These conditions on the spectrum put some mild restrictions on the potential $V$ (listed in the classical work of Reed and Simon \cite{ReedSimonv4}) if we assume 
\begin{equation}
    H_0=-{\hbar^2\over 2m} \Delta+V \;,
\end{equation}
on $D=2,3$  dimensional Euclidean space, and they are true when we consider 
\begin{equation}
    H_0=-{\hbar^2\over 2m} \Delta_g \;,
\end{equation}
on a compact Riemannian manifold (again of dimension 2 or 3) with a metric $g_{ij}$, where $\Delta_g$ is the Laplace-Beltrami operator or Laplacian given by
\begin{eqnarray}
    (\Delta_g \psi)(x)=  \frac{1}{\sqrt{\det g}} \sum_{i,j=1}^{D} \frac{\partial}{\partial x^i} \left( \sqrt{\det g} g^{ij} \frac{\partial \psi(x)}{\partial x^j}\right)   \;,
\end{eqnarray}
in some local coordinates, with $g^{ij}$ being the components of inverse of the metric $g$.
Precisely speaking, it is well known 
\cite{Rosenberg, chavel2} that there exists a complete orthonormal system of $C^\infty$
eigenfunctions $\{\phi_n \}_{n=0}^{\infty}$ in $L^2({\mathcal M})$ and
the spectrum $\sigma(H_0)=\{E_n\} = \{0 = E_0
\leq E_1 \leq E_2 \leq \dots\}$, with $E_n$ tending
to infinity as $n \rightarrow \infty$ and each eigenvalue has
finite multiplicity.
Some eigenvalues are repeated according to their multiplicity. The
multiplicity of the first eigenvalue $E_0=0$ is one and 
the corresponding eigenfunction is constant. From now on, {\it we  assume that there is no degeneracy in the spectrum of the Laplacian for simplicity}. The analysis about how the spectrum changes under the modification of $\delta$ potentials in the presence of degeneracy has been given in Appendix D of our previous work \cite{Annals23}.

\begin{myremark}
Note that the complete nondegeneracy assumption of the spectrum is not an exceptional case. If we introduce  a proper distance in the space of all smooth metrics on the manifold, then the set of metrics with completely non-degenerate spectra are actually dense in this metric space. Incidentally, the space of all smooth metrics becomes what is called a Frechet space under this particular choice of the distance function \cite{Urakawa-thebook}. 
\end{myremark}

\begin{myremark}
  There are upper bounds on the eigenvalues of the Laplacian given in terms of the geometric data, and these give some valuable information about the way the spectrum behaves (for example see Corollary 4.15 of \cite{Urakawa-thebook})
\end{myremark}

The integral kernel of the resolvent $R_{0}(E)$ for $H_0$ or simply Green's function is given by
\begin{eqnarray}
\left(R_0(E)\psi\right)(x) = \left((H_0-E)^{-1}\psi\right)(x)= \int_{\mathcal{M}} G_0(x,y|E)\psi(y) d\mu(y)   \;,
\end{eqnarray}
where $d\mu(y)$ is the volume element in  $\mathcal M$ (on a manifold, expressed in local coordinates, it has the usual $\sqrt{\det g}$ factor in it) and it can be expressed by the following expression away from the diagonal $x=y$,
\begin{eqnarray}
    G_0(x,y|E) = \sum_{n=0}^{\infty} \frac{\phi_n(x) \overline{\phi_n(y)}}{E_n-E}  \;, \label{greenfuncexpansion}
\end{eqnarray}
where $\{\phi_n\}$ is the complete set of eigenfunctions of $H_0$. The Green's function $G_0(x,y|E)$ is a square-integrable function of $x$ for almost all values of $y$ and vice versa \cite{ReedSimonv3}.

When the co-dimension (dimension of the space - dimension of the support of the $\delta$ interaction) is greater than one, $\delta$ interaction must be defined by a renormalization procedure. The main reason for this is based on the singular structure of the Green's function for initial Hamiltonians $H_0$ in two and three dimensions. The history of this subject is quite rich and there has been a vast amount of material in the physics literature, see e.g., \cite{Hoppe, Huang, Jackiw, GosdzinskyTarrach, MeadGodines, ManuelTarrach, Cavalcanti, Coutinho1, Patil, Coutinho2}. An eigenfunction expansion, analogous to (\ref{greenfuncexpansion}), also exists for the   Green's function $G(x,y|E)$ of the modified (formal) Hamiltonian $H$ (a two or three dimensional delta potential added to the free case located at the origin) in \cite{Cavalcanti}. It is possible to express this Green's function $G(x,y|E)$ in terms of the Green's functions of the initial Hamiltonian $H_0$. The standard route in the literature is to construct this Green's function and establish that the Hamiltonian defined by this expression is indeed self-adjoint, hence by {\it the spectral theorem}, there is a complete set of eigenfunctions. In this paper, we prove directly by means of the explicit expression of the constructed Green's function that the corresponding Hamiltonian still has a complete set of eigenfunctions. For this we use the completeness property of the eigenfunctions of the initial Hamiltonian $H_0$,  having only a  discrete spectrum, and an interlacing theorem for the poles of the new Green's function, proved in a previous publication \cite{Annals23}. As a result, we thus establish the self-adjointness  of the resulting Hamiltonian in a novel way (Remark 4.3). Moreover, we have an explicit integral operator for the Hamiltonian, which allows one to apply various approximation methods. There is also great pedagogical value in establishing the existence of an orthonormal basis for a given Hamiltonian, as it demonstrates clearly the validity of one of the fundamental postulates of quantum mechanics. 

It is useful to express  Green's function $G_0$ in terms of the heat kernel $K_t(x,y)$ associated with the operator $H_0$ under the above assumptions. It is given by 
\begin{eqnarray}
    G_0(x,y|E)= \int_{0}^{\infty} K_t(x,y) e^{t E} dt \;, \label{intrepgreen}
\end{eqnarray}
where $\Real(E)<0$ and $H_0 K_t(x,y)= \frac{\partial}{\partial t}K_t(x,y)$ (and can be defined for other values of $E$ in the complex $E$ plane through analytical continuation). We note  that the first term in the short time asymptotic expansion of the diagonal heat kernel for any self-adjoint elliptic second order differential operator \cite{Gilkey} in $D$ dimensions, is given by
\begin{eqnarray}
 K_t(x,x) \sim  t^{-D/2} \;. \end{eqnarray}
This leads to the divergence around $t=0$ in the diagonal part of the Green's function $G_0(x,x|E)$:  
\begin{eqnarray}
    \int_{0}^{\infty} \frac{e^{-t|E|}}{t^{D/2}} \; dt \;,
\end{eqnarray}
for $D=2,3$. In order to make sense of such singular interactions, one must first regularize the Hamiltonian by introducing a cut-off $\epsilon>0$. A natural way, in particular for compact manifolds, is to replace the $\delta$ function by the heat kernel $K_{\epsilon/2}(x,a)$, which converges to $\delta(x-a)$ as $\epsilon \to 0$ (in the distributional sense). It turns out that the regularized Green's function is given by
\begin{eqnarray}
 G^{\epsilon}(x,y|E)= G^{\epsilon}_{0}(x,y|E) + \frac{G^{\epsilon}_{0}(x,a|E) G^{\epsilon}_{0}(a,y|E)}{\frac{1}{\alpha}-G^{\epsilon}_{0}(a,a|E)} \;,    
\end{eqnarray}
where $G^{\epsilon}_{0}(x,y|E) = \int_{\epsilon}^{\infty} K_t(x,y)e^{t E} dt$ with $\Real(E)<0$. Then, we make the coupling constant $\alpha$ dependent on the cut-off $\epsilon$ in such a way that the regularized Green's function has a non-trivial limit as we remove the cut-off. A natural choice for absorbing the divergent part in a redefinition of the coupling constant is given by  
\begin{eqnarray}
 \frac{1}{\alpha(\epsilon)} = \frac{1}{\alpha_R(M)} + \int_{\epsilon}^{\infty} K_t(a,a) e^{t M} d t \;,
\end{eqnarray}
where $M$ is the renormalization scale and could be
eliminated in favor of a physical parameter by imposing a
renormalization condition. Taking the formal limit as $\epsilon \to 0$, we obtain the Krein's type of formula for the integral kernel of the resolvent or Green's function 
\begin{eqnarray}
G(x,y|E)= G_0(x,y|E) + \frac{G_0(x,a|E) G_0(a,y|E)}{\Phi(E)} \;, \label{fullGreenfunction}
\end{eqnarray}
where $ \Phi(E) = \frac{1}{\alpha_R(M)} + \int_{0}^{\infty} K_t(a,a) \left( e^{t M}- e^{t E} \right) \; d t$. Since the bound state energy of the system can be found from the poles of the Green's function, or equivalently zeroes of the function $\Phi$, there must be a relation among $M$, $\alpha_R(M)$, and the bound state
energy of the particle  (due to the presence of $\delta$ potential), say $-\mu^2$. Note that $\alpha_R$ varies with respect to $M$ in a precise way to keep the physics (e.g., bound state energy) independent of this arbitrary choice \cite{AltunkaynakErmanTurgut, ErmanTurgut}. We set the renormalization scale at $M=-\mu^2$ (thinking of a bound state below $E_0$) for simplicity. Then, 
\begin{eqnarray}
   \Phi(E) & = & \frac{1}{\alpha_R} + \int_{0}^{\infty} K_t(a,a) \left( e^{-t\mu^2}- e^{t E} \right) \; d t \nonumber \\ & = & 
   \frac{1}{\alpha_R} + \sum_{n=0}^{\infty} \left( \frac{|\phi_n(a)|^2}{(E_n+\mu^2)}-\frac{|\phi_n(a)|^2}{(E_n-E)}  \right) \nonumber 
   \\ & = &  \frac{1}{\alpha_R}- \sum_{n=0}^{\infty} \frac{|\phi_n(a)|^2(E+\mu^2)}{(E_n-E)(E_n+\mu^2)}\;. \label{Phisum}
\end{eqnarray}
Here we employ the eigenfunction expansion of the heat kernel $K_t(x,y)=\sum_n \overline{\phi_n(x)} \phi_n(y) e^{-t E_n}$ of the Laplacian. The (uniform) convergence of this sum can be shown by using the upper bounds of the heat kernel and this technical part has been given in the Appendix A of our previous work \cite{Annals23}.

Note that we could have chosen a sharp cut-off as well, as is often done in physics literature, for the above calculations. The momentum (in this case energy eigenvalue of the Laplacian) is limited by a finite large number $\Lambda$ to render infinite sums to finite expressions. We then  employ our subtraction to finally take a limit $\Lambda\to \infty$ to remove this arbitrary cut-off in the physical result. It has been shown in \cite{DuttaRoy} that the connection between observable quantities for such point delta interactions in two and three dimensions do not depend on the renormalization scheme that is used.

Moreover, we have shown in \cite{existence} that there exists a unique densely defined closed operator, say $H$, associated with the resolvent whose integral kernel is given by (\ref{fullGreenfunction}).

Since the truncation of the above sum (\ref{Phisum}) has no zeros on the upper and lower complex $E$ plane, the uniform convergence of this sum on compact subsets of the complex plane, in conjunction with the Hurwitz theorem \cite{Conway}  implies that all the zeros of $\Phi$ are located on the real $E$ axis. Then, the spectrum of the full Hamiltonian (\ref{Hamiltonian}) is given by the following proposition, which is a particular case of our previous result \cite{Annals23}:
\begin{myproposition} \label{prop21}Let $\phi_k(x)$ be the  eigenfunction of $H_0$ associated with the eigenvalue  $E_k$.
Then, the  (new) energy eigenvalue $E_{k}^{*}$ of $H$, is found from the unique solution of  the equation
\begin{eqnarray}
\Phi(E)=\frac{1}{\alpha_R}- \sum_{n=0}^{\infty} \frac{|\phi_n(a)|^2(E+\mu^2)}{(E_n-E)(E_n+\mu^2)}   = 0 \label{boundstatenergyrenorm}\;,
\end{eqnarray}
which lies in between $E_{k-1}$ and $E_k$, if $\phi_k(a) \neq 0$ for this particular $k$. If for this particular choice of $k$, we have $\phi_k(a)=0$, the corresponding energy eigenvalue does  not change, i.e., $E_{k}^{*}=E_k$.  
For the ground state ($k=0$), we always have  $E_{0}^{*} < E_0$.
\end{myproposition}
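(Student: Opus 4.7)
The plan is to identify the eigenvalues $E_k^*$ of $H$ as the poles of $G(x,y|E)$ and then read their locations off the Krein formula (\ref{GreensfunctionNR}). I would begin with a pole analysis: in the decomposition $G = G_0 + G_0(\cdot,a)G_0(a,\cdot)/\Phi$, a pole of the first summand at $E_n$ with $\phi_n(a) \neq 0$ is exactly cancelled by the pole of the second summand. This follows from a direct residue computation — near $E_n$ one has $G_0(x,a|E)G_0(a,y|E) \sim \phi_n(x)\overline{\phi_n(a)}\phi_n(a)\overline{\phi_n(y)}/(E_n-E)^2$ while $\Phi(E) \sim -|\phi_n(a)|^2/(E_n-E)$, so the ratio produces $-\phi_n(x)\overline{\phi_n(y)}/(E_n-E)$, which precisely kills the corresponding pole of $G_0$. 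Conversely, if $\phi_n(a)=0$, the numerator product is regular at $E_n$, $\Phi(E_n)$ is finite, and the pole of $G_0(x,y|E)$ survives; this immediately yields $E_k^* = E_k$ and disposes of the third bullet.

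After this reduction, the proposition becomes a statement about zeros of the real scalar function $\Phi$. Termwise differentiation (justified by uniform convergence on compacta away from $\sigma_d(H_0)$, which follows in dimensions $D=2,3$ from Weyl's law $E_n \sim n^{2/D}$ together with the on-diagonal heat-kernel bounds underlying the definition of $\Phi$) yields
\begin{equation}
\Phi'(E) = -\sum_{n=0}^{\infty} \frac{|\phi_n(a)|^2}{(E_n-E)^2} < 0,
\end{equation}
so $\Phi$ is strictly decreasing on each maximal interval of regularity. If $\phi_k(a) \neq 0$, the $n=k$ summand makes $\Phi(E) \to -\infty$ as $E\to E_k^-$ and $\Phi(E) \to +\infty$ as $E\to E_k^+$; by the intermediate value theorem combined with strict monotonicity, there is exactly one zero in $(E_{k-1},E_k)$ whenever $\phi_{k-1}(a)$ is also nonzero. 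When an intermediate $\phi_j(a)$ vanishes, the apparent singularity of $\Phi$ at $E_j$ is removable and $\Phi$ extends continuously across $E_j$ while remaining strictly decreasing, so the intermediate value argument still delivers a unique zero in the enlarged interval of regularity, and it still lies below $E_k$.

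For the ground state I would analyze the limit $E \to -\infty$. Writing
\begin{equation}
\Phi(E) = \frac{1}{\alpha_R} + \sum_{n=0}^{\infty} |\phi_n(a)|^2\left(\frac{1}{E_n+\mu^2} - \frac{1}{E_n-E}\right),
\end{equation}
each summand is nonnegative for $E < -\mu^2$ and monotone increasing in $-E$, so monotone convergence gives $\lim_{E\to-\infty}\Phi(E) = 1/\alpha_R + \sum_n |\phi_n(a)|^2/(E_n+\mu^2)$. The main technical step I expect is to show that in dimensions two and three this limit equals $+\infty$: this is precisely the ultraviolet divergence of the formal expression $G_0(a,a|-\mu^2)$ that required the renormalization in the first place, and it can be established by combining the short-time asymptotics of the heat kernel $K(a,a;t) \sim (4\pi t)^{-D/2}$ with an Abelian/Tauberian argument on the spectral sum. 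Given this divergence, $\Phi(-\infty)=+\infty$ and $\Phi(E_0^-)=-\infty$, so strict monotonicity produces a unique zero $E_0^* \in (-\infty, E_0)$, independently of the chosen renormalized coupling $\alpha_R$. Everything else in the proposition is then immediate.
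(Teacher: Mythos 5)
The paper does not actually prove Proposition \ref{prop21} in the text (it is quoted from \cite{Annals23}), but your argument is correct and follows what is essentially the intended route: pole cancellation in Krein's formula identifies the new eigenvalues with the zeros of $\Phi$ together with the unmoved levels where $\phi_k(a)=0$, and then strict monotonicity of $\Phi$ on each interval between consecutive poles, combined with $\Phi\to\mp\infty$ at the poles and $\Phi\to+\infty$ as $E\to-\infty$ (the ultraviolet divergence of $\sum_n|\phi_n(a)|^2/(E_n+\mu^2)$ in $D=2,3$), yields exactly one zero per interval and $E_0^*<E_0$. Two minor points in your favor: the sign $\Phi'(E)<0$ is the correct one for $\Phi$ as defined in (\ref{boundstatenergyrenorm}) (the paper's (\ref{derivativePhi}) states the opposite sign, which should be read as $|\Phi'|$), and your caveat that when some intermediate $\phi_j(a)=0$ the zero is only guaranteed to lie below $E_k$ within the enlarged interval of regularity is a genuine refinement of the interlacing claim as stated.
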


\begin{proof}
We first split the term in the eigenfunction expansion of the Green's functions $G_0$ and the function $\Phi$ in (\ref{fullGreenfunction}) associated with the isolated simple eigenvalue $E_k$ of $H_0$:
\begin{eqnarray} & & \hskip-1cm
    G(x,y|E)= \sum_{n \neq k} \frac{\phi_n(x) \overline{\phi_n(y)}}{E_n-E} + \frac{\phi_k(x) \overline{\phi_k(y)}}{E_k-E} + \frac{ \left(\sum_{n \neq k} \frac{\phi_n(x) \overline{\phi_n(a)}}{E_n-E} \right) \left( \sum_{n \neq k} \frac{\phi_n(a) \overline{\phi_n(y)}}{E_n-E}\right)}{\frac{1}{\alpha_R}-\sum_{n \neq k} \frac{|\phi_n(a)|^2(E+\mu^2)}{(E_n-E)(E_n+\mu^2)} - \frac{|\phi_k(a)|^2(E+\mu^2)}{(E_k-E)(E_k+\mu^2)}}  \nonumber \\ & & \hspace{1cm}+  \frac{\left(\sum_{n \neq k} \frac{\phi_n(x) \overline{\phi_n(a)}}{E_n-E} \right) \left(\frac{\phi_k(a) \overline{\phi_k(y)}}{E_k-E}\right)}{\frac{1}{\alpha_R}-\sum_{n \neq k} \frac{|\phi_n(a)|^2(E+\mu^2)}{(E_n-E)(E_n+\mu^2)} - \frac{|\phi_k(a)|^2(E+\mu^2)}{(E_k-E)(E_k+\mu^2)}} 
    \nonumber \\ & & \hspace{2cm} + \frac{ \left(\frac{\phi_k(x) \overline{\phi_k(a)}}{E_k-E} \right) \left( \sum_{n \neq k} \frac{\phi_n(a) \overline{\phi_n(y)}}{E_n-E}\right)}{\frac{1}{\alpha_R}-\sum_{n \neq k} \frac{|\phi_n(a)|^2(E+\mu^2)}{(E_n-E)(E_n+\mu^2)} - \frac{|\phi_k(a)|^2(E+\mu^2)}{(E_k-E)(E_k+\mu^2)}} \nonumber \\ & & \hspace{3cm} + \frac{ \left(\frac{\phi_k(x) \overline{\phi_k(a)}}{E_k-E} \right) \left( \frac{\phi_k(a) \overline{\phi_k(y)}}{E_k-E}\right)}{\frac{1}{\alpha_R}-\sum_{n \neq k} \frac{|\phi_n(a)|^2(E+\mu^2)}{(E_n-E)(E_n+\mu^2)} - \frac{|\phi_k(a)|^2(E+\mu^2)}{(E_k-E)(E_k+\mu^2)}} \;.
\end{eqnarray}
If we combine the second and the last term in the above expression, we obtain
\begin{eqnarray} & & \hskip-1cm
    G(x,y|E)= \frac{\phi_k(x) \overline{\phi_k(y)}}{E_k-E}  \left(1- \left(1- \frac{(E_k-E)}{|\phi_k(a)|^2} \left(\frac{1}{\alpha_R}-\sum_{n \neq k} \frac{|\phi_n(a)|^2 (E+\mu^2)}{(E_n-E)(E_n +\mu^2)} + \frac{|\phi_k(a)|^2}{E_k+\mu^2}\right)\right)^{-1}\right) \nonumber \\ & & \hspace{1cm} + \sum_{n \neq k} \frac{\phi_n(x) \overline{\phi_n(y)}}{E_n-E} + (E_k-E) \frac{ \left(\sum_{n \neq k} \frac{\phi_n(x) \overline{\phi_n(a)}}{E_n-E} \right) \left( \sum_{n \neq k} \frac{\phi_n(a) \overline{\phi_n(y)}}{E_n-E}\right)}{(E_k-E) \left(\frac{1}{\alpha_R}-\sum_{n \neq k} \frac{|\phi_n(a)|^2(E+\mu^2)}{(E_n-E)(E_n+\mu^2)}\right) - \frac{|\phi_k(a)|^2(E+\mu^2)}{(E_k+\mu^2)}}  \nonumber \\ & & \hspace{2cm} +  \frac{\left(\sum_{n \neq k} \frac{\phi_n(x) \overline{\phi_n(a)}}{E_n-E} \right) \left(\phi_k(a) \overline{\phi_k(y)}\right)}{(E_k-E)\left(\frac{1}{\alpha_R}-\sum_{n \neq k} \frac{|\phi_n(a)|^2(E+\mu^2)}{(E_n-E)(E_n+\mu^2)}\right) - \frac{|\phi_k(a)|^2(E+\mu^2)}{(E_k-E)(E_k+\mu^2)}} 
    \nonumber \\ & & \hspace{3cm} + \frac{ \left(\phi_k(x) \overline{\phi_k(a)} \right) \left( \sum_{n \neq k} \frac{\phi_n(a) \overline{\phi_n(y)}}{E_n-E}\right)}{(E_k-E) \left(\frac{1}{\alpha_R}-\sum_{n \neq k} \frac{|\phi_n(a)|^2(E+\mu^2)}{(E_n-E)(E_n+\mu^2)}\right) - \frac{|\phi_k(a)|^2(E+\mu^2)}{(E_k+\mu^2)}} \;.
\end{eqnarray}
Except for the first term, it is easy to see that all terms are regular near $E=E_k$. For the first term, if we choose $E$ sufficiently close to $E_k$, i.e., if $\frac{|E_k-E|}{|\phi_k(a)|^2} \left|\frac{1}{\alpha_R}-\sum_{n \neq k} \frac{|\phi_n(a)|^2 (E+\mu^2)}{(E_n-E)(E_n +\mu^2)}  + \frac{|\phi_{k}(a)|^2}{E_k + \mu^2} \right|<1$, the first term in the above equation becomes
\begin{eqnarray}
- \frac{\phi_k(x) \overline{\phi_k(y)}}{|\phi_k(a)|^2}  \left(\frac{1}{\alpha_R}-\sum_{n \neq k} \frac{|\phi_n(a)|^2 (E+\mu^2)}{(E_n-E)(E_n +\mu^2}) + \frac{|\phi_{k}(a)|^2}{E_k + \mu^2} 
\right) + O(|E_k-E|^2)
\end{eqnarray}
so that $G(x,y|E)$ is regular near $E=E_k$ as long as $\phi_k(a)\neq 0$. The uniqueness of the solution can be proved by showing that the sum is an increasing function of $E$ and goes to $-\infty$ as $E \to -\infty$, see Appendix C in \cite{Annals23} for the technical details.
\end{proof}

Similar results for a particular class of potentials have been examined in \cite{Grosche2} in the context of path integrals (in two and three dimensions). However, there is no explicit derivation  showing that the poles of the free resolvent are canceled in the final expression for the Green's function.

\begin{myremark}
Note that these results can be interpreted as a generalization of the well-known Sturm comparison theorems to the singular $\delta$ interactions, it is remarkable that even the renormalized case has this property. 
\end{myremark}

\begin{myremark} One would wonder how the separation between consecutive eigenvalues grow as we increase the index. There are some estimates if one knows how the manifold is isometrically embedded into an Euclidean space, see for example Theorem 5.6 in \cite{Urakawa-thebook}.
\end{myremark}



\section{Orthogonality Relation}

Using a contour integral of the resolvent $R(E)=(H-E)^{-1}$ around each simple eigenvalue $E_{k}^{*}$, we can find the projection operator onto the eigenspace associated with the eigenvalue $E_{k}^{*}$,
\begin{eqnarray}
    \mathbb{P}_k =  -\frac{1}{2\pi i} \oint_{\Gamma_k} R(E) \; d E \;,
\end{eqnarray}
where $\Gamma_k$ is the counter-clockwise oriented closed contour around each simple pole $E_{k}^{*}$, or equivalently 
\begin{eqnarray}
    \psi_k(x) \overline{\psi_k(y)} = -\frac{1}{2\pi i} \oint_{\Gamma_k} G(x,y|E) \; d E \;. \label{Projection}
\end{eqnarray}
From the explicit expression of the Green's function (\ref{fullGreenfunction}) and the residue theorem, we obtain 
\begin{eqnarray}
    \psi_k(x)=  \frac{G_0(x,a|E_{k}^{*})}{ \left(-\frac{d \Phi(E)}{d E}\bigg|_{E=E_{k}^{*}}\right)^{1/2}} \;. \label{boundstatewavefunctionexact}
\end{eqnarray}
Note that the differentiation under the summation yields
\begin{equation}
    {d\Phi(E)\over dE}\Big|_{E_k^*}= - \sum_{n=0}^\infty { |\phi_n(a)|^2\over (E_n-E_k^*)^2} \;. \label{derivativePhi}
\end{equation}
If $\phi_k(a)=0$, this term is skipped in the sum ensuring the expression being well-defined in all these cases. Moreover, in these special cases  then, the corresponding eigenfunction becomes,
\begin{equation}
    \psi_k(x)=\phi_k(x)
.\end{equation}
\begin{myproposition} Let $\phi_n$ be orthonormal set of eigenfunctions of $H_0$, i.e., 

\begin{eqnarray}
H_0 \phi_n &=&E_n \phi_n \nonumber \\ \int_{\mathcal{M}} \overline{\phi_n(x)} \phi_m (x) \; d\mu(x) & =&  \delta_{nm}.
\end{eqnarray} 
Then, the eigenfunctions $\psi_n$ of H, which is formally $H_0$ modified by a delta interaction supported at $x=a$ are orthonormal, that is, 
\begin{eqnarray}
    \int_{\mathcal{M}} \overline{\psi_n(x)} \psi_m (x) \; d\mu( x) = \delta_{nm} \;,
\end{eqnarray}
where $D=1,2,3$. \end{myproposition}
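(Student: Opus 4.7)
The strategy is to substitute the explicit form \eqref{boundstatewavefunctionexact} of the eigenfunctions $\psi_k$, use the spectral expansion \eqref{greenfuncexpansion} of $G_0$, and reduce everything to manipulations of the renormalized function $\Phi$.

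First I would compute, for indices $n,m$ with $\phi_k(a)\neq 0$ for both $k=n,m$ (the generic case),
\begin{eqnarray*}
\int_{\mathcal M}\overline{\psi_n(x)}\psi_m(x)\,d\mu(x)
=\frac{1}{\sqrt{\Phi'(E_n^*)\,\Phi'(E_m^*)}}\int_{\mathcal M}\overline{G_0(x,a|E_n^*)}\,G_0(x,a|E_m^*)\,d\mu(x).
\end{eqnarray*}
Inserting the spectral expansion \eqref{greenfuncexpansion} and using orthonormality of the $\phi_j$ collapses the double sum to
\begin{eqnarray*}
\int_{\mathcal M}\overline{G_0(x,a|E_n^*)}\,G_0(x,a|E_m^*)\,d\mu(x)=\sum_{j=0}^\infty\frac{|\phi_j(a)|^2}{(E_j-E_n^*)(E_j-E_m^*)}.
\end{eqnarray*}
(This sum converges absolutely in $D=2,3$ by Weyl's law, even though the one-pole sums diverge; this is precisely the object that was tamed by renormalization.)

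For $n=m$ the right-hand side is exactly $\Phi'(E_n^*)$ by \eqref{derivativePhi}, and so $\|\psi_n\|^2=1$. For $n\neq m$ I would apply the partial fraction identity
\begin{eqnarray*}
\frac{1}{(E_j-E_n^*)(E_j-E_m^*)}=\frac{1}{E_n^*-E_m^*}\left(\frac{1}{E_j-E_n^*}-\frac{1}{E_j-E_m^*}\right),
\end{eqnarray*}
and then rewrite each divergent one-pole sum by adding and subtracting the reference term $|\phi_j(a)|^2/(E_j+\mu^2)$, so that each bracket becomes the convergent combination appearing in $\Phi$. The renormalization conditions $\Phi(E_n^*)=\Phi(E_m^*)=0$ force both regularized sums to equal $-1/\alpha_R$, hence their difference vanishes and $\langle\psi_n,\psi_m\rangle=0$.

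Finally I would dispose of the special cases where $\phi_k(a)=0$ for some $k$, so that $E_k^*=E_k$ and $\psi_k=\phi_k$. Orthogonality among such $\psi_k$'s is inherited from that of the $\phi_k$'s, while the mixed inner product $\langle\phi_n,\psi_m\rangle$ with $\psi_m=G_0(\cdot,a|E_m^*)/\sqrt{\Phi'(E_m^*)}$ reduces, by the spectral expansion, to $\overline{\phi_n(a)}/(E_n-E_m^*)$, which is zero precisely because $\phi_n(a)=0$. The main technical point, and the step requiring the most care, is the rearrangement of the divergent individual sums using the $+\mu^2$ subtraction built into $\Phi$: it is exactly the renormalization prescription that makes the cancellation in the $n\neq m$ case manifest and rigorous.
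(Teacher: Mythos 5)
Your proposal is correct and follows essentially the same route as the paper: insert the explicit eigenfunctions, collapse the double sum via orthonormality of the $\phi_j$, use the identity for $\Phi'(E_n^*)$ in the diagonal case, and for $n\neq m$ apply partial fractions with the $\mu^2$-subtraction so that the result reduces to $\bigl(\Phi(E_n^*)-\Phi(E_m^*)\bigr)/(E_n^*-E_m^*)=0$. Your treatment of the nodal case $\phi_k(a)=0$ matches the remark the paper appends after its proof.
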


\begin{proof} We first prove for $D=2,3$, where the renormalization is needed to define point delta interactions properly. 

Using bilinear expansion (\ref{greenfuncexpansion}) of the Green's function of $H_0$ and the eigenfunction (\ref{boundstatewavefunctionexact}), we obtain
\begin{eqnarray}
    & &  \int_{\mathcal{M}} \overline{\psi_n(x)} \psi_m (x) \; d\mu( x) =  \int_{\mathcal{M}} \frac{\overline{G_0(x,a|E_{n}^{*})}}{ \left(-\frac{d\Phi(E)}{d E}\bigg|_{E=E_{n}^{*}}\right)^{1/2}} \frac{G_0(x,a|E_{m}^{*})}{\left(-\frac{d \Phi(E)}{d E}\bigg|_{E=E_{m}^{*}}\right)^{1/2}}  \; d\mu(x) \nonumber \\  & & = \frac{1}{\left(-\frac{d \Phi(E)}{d E}\bigg|_{E=E_{n}^{*}}\right)^{1/2} \left(-\frac{d \Phi(E)}{d E}\bigg|_{E=E_{m}^{*}}\right)^{1/2}}\int_{\mathcal{M}} \sum_k \frac{\phi_k(a) \overline{\phi_k(x)}}{E_k-E_{n}^*}  \sum_l \frac{\phi_l(x) \overline{\phi_l(a)}}{E_l-E_{m}^*}  \; d\mu(x) \;.
\end{eqnarray}
Interchanging the order of summation and integration and using the fact that $\phi_k$'s are orthonormal functions, we have
\begin{eqnarray}
 \int_{\mathcal{M}} \overline{\psi_n(x)} \psi_m (x) \; d\mu( x) = \frac{1}{\left(-\frac{d \Phi(E)}{d E}\bigg|_{E=E_{n}^{*}}\right)^{1/2} \left(-\frac{d \Phi(E)}{d E}\bigg|_{E=E_{m}^{*}}\right)^{1/2}} \sum_k \frac{|\phi_k(a)|^2}{(E_k-E_{n}^*)(E_k-E_{m}^*)} \;. \label{orth2}
\end{eqnarray}
If $n=m$ in \ref{orth2}, then it is easy to show that the new eigenfunctions $\psi_n$'s are automatically normalized thanks to the identity (\ref{derivativePhi}):
\begin{eqnarray}
    \int_{\mathcal{M}}  |\psi_n(x)|^2 d\mu(x) = -\frac{1}{\frac{d \Phi(E)}{d E}\bigg|_{E=E_{n}^{*}}} \sum_{k=0}^{\infty} 
    \frac{|\phi_k(a)|^2}{(E_k-E_n^{*})^2} = 1 \;.
\end{eqnarray}
For the case $n \neq m$, we first formally decompose the expression in the summation with a cut-off $N$ as a sum of two partial fractions
\begin{eqnarray}
    \sum_{k=0}^{N} \frac{|\phi_k(a)|^2}{(E_k-E_{n}^*)(E_k-E_{m}^*)} =  \sum_{k=0}^{N} \frac{|\phi_k(a)|^2}{(E_{n}^{*}-E_{m}^*)} \left( \frac{1}{E_k-E_{n}^*}-\frac{1}{E_k-E_{m}^*}\right) \;.
\end{eqnarray}
As explained in the renormalization procedure, each term $\sum_{k=0}^{N} \frac{|\phi_k(a)|^2}{E_k-E_{n}^*}$ is divergent as $N \to \infty$. Motivated by this, we add and subtract $\frac{1}{\alpha_R} + \sum_{k=0}^{N} \frac{|\phi_k(a)|^2}{E_k + \mu^2}$ to the above expression and obtain in the limit $N \to \infty$
\begin{eqnarray} \int_{\mathcal M} \overline{\psi_n(x)} \psi_m (x) \; d\mu( x) = \frac{1}{\left(E_{n}^* - E_{m}^* \right)} \frac{\left(\Phi(E_{n}^{*}) - \Phi(E_{m}^*)\right)}{\left(-\frac{d\Phi(E)}{d E}\bigg|_{E=E_{n}^{*}}\right)^{1/2} \left(-\frac{d \Phi(E)}{d E}\bigg|_{E=E_{m}^{*}}\right)^{1/2}} \;.
\end{eqnarray}
Since the zeroes of the function $\Phi$ are the bound state of the modified system, that is, $\Phi(E_{n}^{*})=0$ and $\Phi(E_{m}^{*})=0$ for all $n, m$ (when $n\neq m$), this completes our proof of the orthogonality of eigenfunctions for the modified  Hamiltonian having discrete spectrum.

The case for $D=1$ can easily be proved by following the same steps introduced above, except that there is no need for renormalization. 
\end{proof}

\begin{myremark}
If it so happens that for some $k$, $\phi_k(a)=0$, then the corresponding eigenvalue does not change, moreover the eigenfunction remains the same as $\phi_k(x)$. In this case, we see that the orthogonality among all the eigenfunctions continues to hold as well thanks to $\phi_k(a)=0$ again.
\end{myremark}

\section{Completeness Relation}

\begin{myproposition}
Let $\phi_n$ be a complete set of eigenfunctions of $H_0$, i.e., 
\begin{eqnarray}
H_0 \phi_n &=&E_n \phi_n \nonumber \\   \sum_{n=0}^{\infty} \overline{\phi_n(x)} \phi_n (y) &=& \delta(x-y)\;.
\end{eqnarray}
Then, the eigenfunctions $\psi_n$ of $H$, which is formally $H_0$ modified by a delta interaction supported at $x=a$, form a complete set, that is, 
\begin{eqnarray}
    \sum_{n=0}^{\infty} \overline{\psi_n(x)} \psi_n (y) = \delta(x-y)\;.
\end{eqnarray}
\end{myproposition}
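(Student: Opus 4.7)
The plan is to substitute the explicit formula $\psi_n(x) = G_0(x,a|E_n^*)/\sqrt{\Phi'(E_n^*)}$ from (\ref{boundstatewavefunctionexact}) (together with the unchanged $\psi_n=\phi_n$ in the exceptional case $\phi_n(a)=0$) into the completeness sum and expand each Green's function via the bilinear representation (\ref{greenfuncexpansion}). After interchanging the order of summation, the identity reduces to evaluating
\begin{equation*}
S_{kl} := \sum_{n}\frac{1}{\Phi'(E_n^*)(E_k-E_n^*)(E_l-E_n^*)},
\end{equation*}
where $n$ runs over the shifted eigenvalues, and showing that only the diagonal terms $k=l$ with $\phi_k(a)\neq 0$ survive, each contributing exactly $\overline{\phi_k(x)}\phi_k(y)$. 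The remaining indices with $\phi_k(a)=0$ fill in through the preserved eigenfunctions $\psi_k=\phi_k$, and the total reassembles into the original completeness $\sum_k\overline{\phi_k(x)}\phi_k(y)=\delta(x-y)$.

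The central computation is done by residue calculus applied to the meromorphic function
\begin{equation*}
f_{kl}(E) := \frac{1}{\Phi(E)(E_k-E)(E_l-E)}.
\end{equation*}
Its finite poles are precisely at the zeros $E_n^*$ of $\Phi$, with residues $1/[\Phi'(E_n^*)(E_k-E_n^*)(E_l-E_n^*)]$ summing to $S_{kl}$. Critically, $E_k$ and $E_l$ (with $\phi_k(a),\phi_l(a)\neq 0$) are \emph{not} poles of $f_{kl}$: $\Phi$ has simple poles at these original eigenvalues with leading behavior $\Phi(E)\sim|\phi_k(a)|^2/(E_k-E)$, which cancels a factor in the denominator. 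For $k\neq l$, $f_{kl}$ therefore has no further finite poles and decays at spectral infinity like $1/[\Phi(E)E^2]$ --- fast enough for the residue at infinity to vanish --- so $S_{kl}=0$. For $k=l$, the squared factor $(E_k-E)^2$ combines with $\Phi$'s pole to leave a simple pole of $f_{kk}$ at $E_k$ with residue $-1/|\phi_k(a)|^2$; the same vanishing-at-infinity argument then delivers $S_{kk}=1/|\phi_k(a)|^2$, so $|\phi_k(a)|^2\,S_{kk}=1$ as required.

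The main obstacle is making these manipulations rigorous in dimensions two and three, where the individual truncated tail sums $\sum_{k\leq N}|\phi_k(a)|^2/(E_k-E)$ diverge as $N\to\infty$ --- precisely the divergence that motivated the renormalization in the first place. The plan is to follow the same cutoff-and-subtract device used in the orthogonality proof: keep a cutoff $N$, add and subtract the renormalization block $\alpha_R^{-1}+\sum_{k\leq N}|\phi_k(a)|^2/(E_k+\mu^2)$ inside each offending partial sum so that only the finite combinations defining $\Phi$ survive, and pass to $N\to\infty$ only after this reorganization. A secondary technical point is justifying that the residue of $f_{kl}$ at infinity vanishes; this relies on the growth of $\Phi$ at spectral infinity (logarithmic in dimension two, like $\sqrt{|E|}$ in dimension three), coming from the Weyl/heat-kernel asymptotics invoked in Section~\ref{section2}.
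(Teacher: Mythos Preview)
Your approach and the paper's are two packagings of the same residue computation. Both recognize that $\sum_n \overline{\psi_n(x)}\psi_n(y)$ is the sum over $n$ of residues of $G_0(x,a|E)G_0(a,y|E)/\Phi(E)$ at the new eigenvalues $E_n^*$, and both convert this into the sum of residues at the \emph{old} eigenvalues $E_n$, where each residue is simply $\overline{\phi_n(x)}\phi_n(y)$. The paper does this by keeping the two Green's functions intact and deforming the contours $\Gamma_n$ (around each $E_n^*$) through a ``snake'' contour, using the interlacing $E_{n-1}<E_n^*<E_n$, to clockwise contours $\Gamma_{dual}^n$ around each $E_n$; only at the very end does it split off the locally singular piece $\phi_n(x)\overline{\phi_n(a)}/(E_n-E)$ of $G_0$ to read off the residue. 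You instead expand both $G_0$'s globally into the eigenfunction sum first, reducing everything to the scalar meromorphic functions $f_{kl}(E)=1/[\Phi(E)(E_k-E)(E_l-E)]$, and then invoke vanishing of the residue at infinity to equate $S_{kl}$ with minus the residue at $E_k$ (and $E_l$).

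What each buys: your mode-by-mode version makes the mechanism at $E_k$ completely transparent (the simple pole of $\Phi$ exactly cancels one or two factors of $(E_k-E)$), and isolates the analytic input needed---growth of $\Phi(E)$ at complex infinity. The paper's version avoids the interchange of the outer sum $\sum_n$ with the double mode sum $\sum_{k,l}$, which in your scheme is a genuine additional step requiring justification (both sums are infinite and the summands are not obviously absolutely summable). The paper's contour deformation and your residue-at-infinity argument are really the same decay requirement on the integrand along $\pm i\infty$ and $-\infty$; neither version makes this fully rigorous.

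One small slip: with the convention $\Phi(E)\sim |\phi_k(a)|^2/(E-E_k)$ near $E_k$, the residue of $f_{kk}$ there is $+1/|\phi_k(a)|^2$, not $-1/|\phi_k(a)|^2$. The compensating sign lives in the projection formula (the paper's equation~(\ref{derivativePhi}) also has a sign inconsistency), so the final conclusion is unaffected once signs are tracked consistently.
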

\begin{proof} Let $\Gamma_n$ be the counter-clockwise oriented closed contours around each simple pole $E_{n}^{*}$ and $\Gamma_n \cap \Gamma_m = \emptyset$ for $n \neq m$, as shown in Figure \ref{fig:1}.

\begin{figure}[h!]
    \begin{center}
\begin{tikzpicture}[scale=0.8,
    decoration={
    markings,
    mark=between positions 1.2 and 1.8 step 0.2 with {\arrowreversed{stealth}}}
                ]

\draw[->, red, ultra thick] (-7, 0) arc [radius = 1cm, start angle= 0, end angle= 360];
\draw[->, red, ultra thick] (-4, 0) arc [radius = 1cm, start angle= 0, end angle= 360];
\draw[->, red, ultra thick] (-1, 0) arc [radius = 1cm, start angle= 0, end angle= 360];
\draw[->, red, ultra thick] (6, 0) arc [radius = 1cm, start angle= 0, end angle= 360];

        \draw[->] (-9.5,0) -- (9,0);
          \draw[->] (0,-3) -- (0,3);
		\node [scale=1] at (-8,0) {${\color{red}\times}$};
        \node [scale=1] at (-5,0) {${\color{red}\times}$};
        \node [scale=1] at (-2,0) {${\color{red}\times}$};
        \node [scale=1.5] at (1, 0.2) {${\color{red}\cdots}$};
        \node [scale=1] at (5,0) {${\color{red}\times}$};
        \node [scale=1.5] at (8, 0.2) {${\color{red}\cdots}$};
        \node  at (9,-0.5) {$\Real(E)$};
         \node  at (0.7,3) {$\Imaginary(E)$};
\node  at (-8,-0.5) {$E_{0}^*$};
\node  at (-5,-0.5) {$E_{1}^*$};
\node  at (-2,-0.5) {$E_{2}^*$};
\node  at (5,-0.5) {$E_{n}^*$};

\node at (3.5,0) [circle,fill,inner sep=1.5pt]{};
\node  at (3.5,-0.5) {$E_{n-1}$};

\node at (-6.5,0) [circle,fill,inner sep=1.5pt]{};
\node  at (-6.5,-0.5) {$E_{0}$};

\node at (-3.5,0) [circle,fill,inner sep=1.5pt]{};
\node  at (-3.5,-0.5) {$E_{1}$};

\node at (-0.5,0) [circle,fill,inner sep=1.5pt]{};
\node  at (-0.5,-0.5) {$E_{2}$};

    \end{tikzpicture}     
\end{center}
\caption{The contours $\Gamma_n$ along each simple pole $E_{n}^{*}$ with counterclockwise orientation.}
    \label{fig:1}
\end{figure}

Then, the projection onto the associated eigenspace is given by the formula (\ref{Projection}), and thanks to Krein's formula for the Green's function of the modified Hamiltonian (\ref{fullGreenfunction}), we have
\begin{eqnarray}
   \sum_{n=0}^{\infty} \overline{\psi_n(x)} \psi_n (y) =   \frac{1}{2\pi i} \sum_{n=0}^{\infty} \oint_{\Gamma_n \supset E_{n}^*} \left(G_0(x,y|E)+\frac{G_0(x,a|E) G_0(a,y|E)}{\Phi(E)}  \right) \, dE \;.
\end{eqnarray}
Note that the total expression in the Krein's formula has only poles at $E_n^*$'s, when we think of it as the sum of two separate expressions, we have the original eigenvalues, $E_n$, reappearing as poles again. 
Here the contribution coming from the Green's function of the initial Hamiltonian $H_0$, which is the first term of Krein's formula,  for the above contour integral  {\it vanishes since the poles $E_n$ of $G_0$ are all located outside} at each $\Gamma_n$ (note that in the special case of coincidence of one $E_k^*$ with $E_k$, $\phi_k(a)=0$, so that the contribution of the other term is zero and we pick the original wavefunctions $\phi_k(x)$, so in such cases we exclude these terms from the summation and write them separately). For simplicity, we assume that all $E_k^*\neq E_k$ from now on.
Note that thanks to the denominators we can elongate the contours to ellipses that extend to infinity along the imaginary direction (on the complex $E$-plane). We now continuously deform this contour to the following extended contour $\Gamma_{snake}$, as shown in Figure \ref{fig:2}. Note that we have {\it no poles of the Green's function on the left part of the line $E_{0}^{*} + i {\mathbb R}$ nor any zeros of} $\Phi(E)$, the product of two Green's functions decay rapidly as $|E|\to \infty$ along the negative real direction as well as along the imaginary directions, hence we have no contributions from the contours at infinity for these deformations. This observation allows us to change the contour as described below.

Using the interlacing theorem stated in  Proposition \ref{prop21}, we can, so to speak,  flip the contour while preserving the value of the integration and then deform the contour to the one $\Gamma_{dual}$ that consists of isolated closed contours $\Gamma_{dual}^{n}$ around each isolated eigenvalue $E_n$ of the initial Hamiltonian $H_0$ 
with opposite orientation, as shown in Figure \ref{fig:5}.
\begin{figure}[h!]
    \begin{center}
\begin{tikzpicture}[scale=0.8,
    decoration={
    markings,
    mark=between positions 1.2 and 1.8 step 0.2 with {\arrowreversed{stealth}}}
                ]

\draw[<-, red, ultra thick] (-7, 0) arc [radius = 1cm, start angle= 0, end angle= -180];
\draw[<-, red, ultra thick] (-9, 0) -- (-9, 1);

\draw[->, red, ultra thick] (-7, 0) arc [radius = 0.5cm, start angle= 190, end angle= -10];

\draw[<-, red, ultra thick] (-4, 0) arc [radius = 1cm, start angle= 0, end angle= -180];
\draw[->, red, ultra thick] (-4, 0) arc [radius = 0.5cm, start angle= 190, end angle= -10];

\draw[<-, red, ultra thick] (-1, 0) arc [radius = 1cm, start angle= 0, end angle= -180];

\draw[<-, red, ultra thick] (5.9, 0) arc [radius = 1cm, start angle= 0, end angle= -180];

\draw[->, red, ultra thick] (2.9, 0) arc [radius = 0.5cm, start angle= 190, end angle= -10];

        \draw[->] (-9.5,0) -- (9,0);
          \draw[->] (0,-3) -- (0,3);
		\node [scale=1] at (-8,0) {${\color{red}\times}$};
        \node [scale=1] at (-5,0) {${\color{red}\times}$};
        \node [scale=1] at (-2,0) {${\color{red}\times}$};
        \node [scale=1.5] at (1, 0.2) {${\color{red}\cdots}$};
        \node [scale=1] at (5,0) {${\color{red}\times}$};
        \node [scale=1.5] at (8, 0.2) {${\color{red}\cdots}$};
        \node  at (9,-0.5) {$\Real(E)$};
         \node  at (0.7,3) {$\Imaginary(E)$};
\node  at (-8,-0.5) {$E_{0}^*$};
\node  at (-5,-0.5) {$E_{1}^*$};
\node  at (-2,-0.5) {$E_{2}^*$};
\node  at (5,-0.5) {$E_{n}^*$};

\node at (3.5,0) [circle,fill,inner sep=1.5pt]{};
\node  at (3.5,-0.5) {$E_{n-1}$};

\node at (-6.5,0) [circle,fill,inner sep=1.5pt]{};
\node  at (-6.5,-0.5) {$E_{0}$};

\node at (-3.5,0) [circle,fill,inner sep=1.5pt]{};
\node  at (-3.5,-0.5) {$E_{1}$};

\node at (-0.5,0) [circle,fill,inner sep=1.5pt]{};
\node  at (-0.5,-0.5) {$E_{2}$};


    \end{tikzpicture}     
\end{center}
\caption{The contour $\Gamma_{snake}$}
    \label{fig:2}
\end{figure}

\begin{figure}[h!]
    \begin{center}
\begin{tikzpicture}[scale=0.8,
    decoration={
    markings,
    mark=between positions 1.2 and 1.8 step 0.2 with {\arrowreversed{stealth}}}
                ]       
  
\draw[<-, red, ultra thick] (-5.5, 0) arc [radius = 1cm, start angle= 0, end angle= 360];
\draw[<-, red, ultra thick] (-2.5, 0) arc [radius = 1cm, start angle= 0, end angle= 360];
\draw[<-, red, ultra thick] (0.5, 0) arc [radius = 1cm, start angle= 0, end angle= 360];
\draw[<-, red, ultra thick] (4.5, 0) arc [radius = 1cm, start angle= 0, end angle= 360];

        \draw[->] (-9.5,0) -- (9,0);
          \draw[->] (0,-3) -- (0,3);
		\node [scale=1] at (-8,0) {${\color{red}\times}$};
        \node [scale=1] at (-5,0) {${\color{red}\times}$};
        \node [scale=1] at (-2,0) {${\color{red}\times}$};
        \node [scale=1.5] at (1.5, 0.2) {${\color{red}\cdots}$};
        \node [scale=1] at (5,0) {${\color{red}\times}$};
        \node [scale=1.5] at (7, 0.2) {${\color{red}\cdots}$};
        \node  at (9,-0.5) {$\Real(E)$};
         \node  at (0.7,3) {$\Imaginary(E)$};
\node  at (-8,-0.5) {$E_{0}^*$};
\node  at (-5,-0.5) {$E_{1}^*$};
\node  at (-2,-0.5) {$E_{2}^*$};
\node  at (5,-0.5) {$E_{n}^*$};

\node at (3.5,0) [circle,fill,inner sep=1.5pt]{};
\node  at (3.5,-0.5) {$E_{n-1}$};

\node at (-6.5,0) [circle,fill,inner sep=1.5pt]{};
\node  at (-6.5,-0.5) {$E_{0}$};

\node at (-3.5,0) [circle,fill,inner sep=1.5pt]{};
\node  at (-3.5,-0.5) {$E_{1}$};

\node at (-0.5,0) [circle,fill,inner sep=1.5pt]{};
\node  at (-0.5,-0.5) {$E_{2}$};

    \end{tikzpicture}     
\end{center}
\caption{The contours $\Gamma_{dual}^{n}$ along each simple pole $E_{n}$ with clockwise orientation.}
    \label{fig:5}
\end{figure}

Hence, we have
\begin{eqnarray}
   \sum_{n=0}^{\infty} \overline{\psi_n(x)} \psi_n (y) =   \frac{1}{2\pi i} \sum_{n=0}^{\infty} \oint_{\Gamma_{dual}^{n} \supset E_{n}} \frac{G_0(x,a|E) G_0(a,y|E)}{\Phi(E)}  \, dE \;.
\end{eqnarray}
We then assume that all isolated closed contours $\Gamma_{dual}^{n}$ are sufficiently small. To be more precise, one must consider the truncated sum, for the sake of clarity we ignore this subtlety for now. Then, the above expression can be written as
\begin{eqnarray}
     \frac{1}{2\pi i} \sum_{n=0}^{\infty} \oint_{\Gamma_{dual}^{n} \supset E_{n}} \frac{G_0(x,a|E) G_0(a,y|E)}{\frac{1}{\alpha_R} + \sum_{l=0}^{\infty} \frac{|\phi_l(a)|^2}{E_l+\mu^2} -\frac{|\phi_n(a)|^2}{E_n -E}-\sum_{l \neq n}^{\infty} \frac{|\phi_l(a)|^2}{E_l-E} }\, dE \;.
\end{eqnarray}
As we know from the proof of cancellation of poles (in our previous work), we split the above expression in the following way 
\begin{eqnarray*}
   & &   \frac{1}{2\pi i} \sum_{n=0}^{\infty} \oint_{\Gamma_{dual}^{n} \supset E_{n}} \Big( g_n(x,a|E) + \frac{\overline{\phi_n(a)} \phi_n(x)}{E_n-E}\Big) \\
  & & \hspace{4cm} \times   \Big( \frac{(E_n-E)}{D_n(\alpha_R,E)(E_n-E)-|\phi_n(a)|^2}\Big) \Big(g_n(a,y|E) + \frac{\overline{\phi_n(y)} \phi_n(a)}{E_n-E}\Big) \, dE \;,
\end{eqnarray*}
where the functions $g_n$ and $D_n$ are regular/holomorphic inside for each one of $\Gamma_{dual}^n$, 
which are defined near $E=E_n$ for a given  $n$ as:
\begin{eqnarray}
    g_n(x,y|E) & := & \sum_{ k\neq n} \frac{\phi_k(x) \overline{\phi_k(y)}}{E_k-E} \;, \\
    D_n(\alpha, E) & := & \frac{1}{\alpha}-\sum_{k \neq n} \frac{|\phi_k(a)|^2}{E_k-E}  \;.
\end{eqnarray}
Then, the above integral must have the following form:
\begin{eqnarray*}
   & &   \frac{1}{2\pi i} \sum_{n=0}^{\infty} \oint_{\Gamma_{dual}^{n} \supset E_{n}} \Big( \text{holomorphic \, part} + \frac{|\phi_n(a)|^2 \overline{\phi_n(y)} \phi_n(x)}{E_n-E}\Big)  \Big( \frac{1}{D(\alpha_R,E)(E_n-E)-|\phi_n(a)|^2}\Big) \, dE \;.
\end{eqnarray*}
Applying the residue theorem, we obtain 
\begin{eqnarray}
   \sum_{n=0}^{\infty} \overline{\psi_n(x)} \psi_n (y) =   \frac{1}{2\pi i} \sum_{n=0}^{\infty} \frac{\phi_n(x) \overline{\phi_n(y)}}{-|\phi_n(a)|^2} \left( -2\pi i |\phi_n(a)|^2 \right) \;,
\end{eqnarray}
where the minus sign is due to the opposite orientation of the contour $\Gamma_{dual}$. Finally (which should be done in a more rigorous way by  taking a limit of truncated expressions), we prove
\begin{eqnarray}
   \sum_{n=0}^{\infty} \overline{\psi_n(x)} \psi_n (y) =      \sum_{n=0}^{\infty} \overline{\phi_n(x)} \phi_n (y) = \delta(x-y) \;. 
\end{eqnarray}
\end{proof}

\begin{myremark} As explained above, when for a particular value $k$, $\phi_k(a)=0$, our proof can be modified, by separating this eigenfunction in the Green function and then deforming the contours accordingly. In our previous work \cite{Annals23},  possible degeneracy (corresponding to a $d$ dimensional eigensubspace) is also discussed for a singular interaction. When all the degenerate eigenvectors are zero at $a$, there is no effect of the singular interaction; hence we can separate this projection and repeat our proof.  If that is not the case, then the singular interaction lifts the degeneracy in a particular direction, as explained precisely in \cite{Annals23}. The eigenvector in this particular direction changes to $G_0(x,a|E^*)$, where $E^*$ refers to the new eigenvalue appearing in the spectrum,  and the other orthogonal directions, forming a $d-1$ dimensional subspace, are left intact.  So our proof goes through in this case as well by separating the unaffected projection and repeating our proof accordingly.  \end{myremark}

\begin{myremark}
Interestingly, these observations lead to an explicit construction of the resulting renormalized Hamiltonian. Suppose that there is a set of $\phi_k(x)$ for which we have $\phi_k(a)=0$, call this set of indices as ${\mathcal N}$, nodal indices, then  the renormalized Hamiltonian becomes (as an integral operator)
\begin{equation}
\langle x|H| y\rangle=\sum_{ k\notin {\mathcal N}}^\infty E_k^* \left( {d\Phi(E)\over dE}\Big|_{E_k^*}\right)^{-1}G_0(x, a|E_k^*)G_0(a, y|E_k^*)
+\sum_{k \in \mathcal{N}}E_k\overline{ \phi_k(x)}\phi_k(y) \;. \end{equation}    
\end{myremark}
\begin{myremark}
Incidentally, the above integral kernel can be utilized to show that the operator $H$, defined through this kernel,  is {\it essentially self-adjoint}  thanks to the example 9.25  given in \cite{Hall} and stated (somewhat more intuitively) below for convenience.  
\end{myremark}
Suppose we have a symmetric (what physicists typically call Hermitian) operator $A$ which has a complete set of eigenvectors, then the closure of operator $A$, that is if we define $A$ on a slightly larger set, by adding all vectors for which 
$A$ acts continuously to its domain, becomes a self-adjoint operator, see e.g., \cite{Michelangeli} for a pedagogical discussion of this.
Note that the above expression does not manifest $H$ as a {\it perturbation or modification} of $H_0$, it may be possible to reexpress this kernel as $\langle x|H_0|y\rangle+\delta_R(x,y)$, for some function $\delta_R$ which is not in the domain of $H_0$.
Alternatively, we can rewrite the Hamiltonian as an abstract operator,
\begin{equation}
H=\sum_{ k\notin {\mathcal N}}^\infty E_k^*\, (H_0-E_k^*)^{-1}| a\rangle \left( {d\Phi(E)\over dE}\Big|_{E_k^*}\right)^{-1}\langle a| (H_0-E_k^*)^{-1}+
    \sum_{k \in \mathcal{N}}E_k|\phi_k\rangle\langle \phi_k|
.\end{equation}
It is clear that the resulting (renormalized) operator cannot be expressed as a differential operator, but only as an integral operator.

\begin{myremark}
    Using the development in our previous work \cite{Annals23}, the present discussion can be easily extended to $N$ center case, the case where delta interaction is supported on curves in the plane or space etc. In principle, all these extensions are possible and left as an exercise for an enthusiastic reader to get involved with singular interactions.    
\end{myremark}

\begin{myproposition} The set of functions $G_0(x, a|E_k^*)-G_0(x,a|E_l^*)$ are in the domain of the initial Hamiltonian $H_0$.
\end{myproposition}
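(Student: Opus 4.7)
The plan is to exploit the distributional identity $(H_0 - E)\, G_0(\cdot, a|E) = \delta_a$ and observe that the singular delta-function sources cancel when one takes a difference of Green's functions at two different energies. Writing $\chi(x) := G_0(x,a|E_k^*) - G_0(x,a|E_l^*)$, I will verify (i) $\chi \in L^2(\mathcal{M})$ and (ii) $H_0\chi$, interpreted distributionally, is also in $L^2(\mathcal{M})$. These two properties together are exactly the statement that $\chi \in D(H_0)$, since $H_0$ is self-adjoint and hence closed on its maximal distributional domain.

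For (i), the paper has already noted that each individual Green's function $G_0(\cdot, a|E)$ lies in $L^2(\mathcal{M})$ for $D=2,3$; this reflects the fact that the short-distance singularity is only $\log |x-a|$ in $D=2$ and $|x-a|^{-1}$ in $D=3$, and both are locally square-integrable against the Riemannian volume element. Hence $\chi \in L^2(\mathcal{M})$ by linearity. For (ii), I apply $H_0$ to $\chi$ distributionally, using $H_0\, G_0(\cdot, a|E) = \delta_a + E\, G_0(\cdot, a|E)$; the delta-function contributions cancel exactly, leaving
\[
H_0 \chi = E_k^*\, G_0(\cdot, a|E_k^*) - E_l^*\, G_0(\cdot, a|E_l^*),
\]
which is a genuine $L^2(\mathcal{M})$ function by step (i). This is the essential observation: although $G_0(\cdot, a|E)$ alone is \emph{not} in $D(H_0)$ (precisely because $H_0 G_0(\cdot, a|E)$ contains a delta source), the difference at two energies eliminates the source and regularizes the short-distance behavior enough to land in the operator domain.

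An equivalent and more transparent repackaging uses the first resolvent identity: in the generic case $E_k^*, E_l^* \in \rho(H_0)$, guaranteed by the strict interlacing in Proposition \ref{prop21},
\[
\chi = (E_k^* - E_l^*)\, R_0(E_k^*)\, G_0(\cdot, a|E_l^*),
\]
realizing $\chi$ as $R_0(E_k^*)$ applied to the $L^2$-vector $G_0(\cdot, a|E_l^*)$; since $\operatorname{Ran} R_0(E_k^*) = D(H_0)$, the conclusion is immediate. The main obstacle is the exceptional case $\phi_k(a)=0$, in which Proposition \ref{prop21} forces $E_k^*=E_k \in \sigma(H_0)$ and $R_0(E_k^*)$ is undefined. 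Fortunately, the residue of $G_0(\cdot, a|E)$ at $E_k$ is proportional to $\overline{\phi_k(a)}\phi_k(x)=0$, so $G_0(\cdot, a|E_k^*)$ makes sense by analytic continuation across that eigenvalue; the resolvent-identity step is then carried out with the reduced resolvent on the orthogonal complement of $\ker(H_0 - E_k)$, or equivalently one appeals directly to the distributional computation above, which never required $E_k^* \in \rho(H_0)$ in the first place.
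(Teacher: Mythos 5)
Your proof is correct, but it takes a genuinely different route from the paper. The paper works entirely with the eigenfunction expansion: it writes
$\xi(x)=(E_k^*-E_l^*)\sum_n \phi_n(x)\overline{\phi_n(a)}/[(E_n-E_k^*)(E_n-E_l^*)]$, identifies $\|H_0\xi\|^2$ with the weighted sum $\sum_n E_n^2|\phi_n(a)|^2/[(E_n-E_k^*)^2(E_n-E_l^*)^2]$, and proves convergence of the tail by converting $(E_n+E_k^*)^{-k}$ into Laplace-type integrals of the diagonal heat kernel and invoking the bound $K_t(a,a)\leq V(\mathcal M)^{-1}+Ct^{-D/2}$. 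You instead use the first resolvent identity, $\xi=(E_k^*-E_l^*)R_0(E_k^*)G_0(\cdot,a|E_l^*)$, together with $\Ran R_0(E_k^*)=D(H_0)$, or equivalently the cancellation of the $\delta_a$ sources in the distributional identity $(H_0-E)G_0(\cdot,a|E)=\delta_a$. Your argument is shorter, more structural, and makes transparent \emph{why} the difference (but not each Green's function separately) lies in the domain; it also correctly flags the exceptional case $E_k^*=E_k$. What the paper's computation buys in exchange is that it actually \emph{proves} the requisite summability from geometric input, making explicit where $D=2,3$ and the Ricci bound enter; your argument outsources that analytic content to the assertion $G_0(\cdot,a|E)\in L^2(\mathcal M)$, i.e.\ $\sum_n|\phi_n(a)|^2/(E_n-E)^2<\infty$, which is legitimate here since the paper assumes it throughout (it is needed for $\Phi'(E_k^*)$ and the normalization of $\psi_k$), but it is essentially the same estimate the paper's heat-kernel computation establishes in a weighted form. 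Two small cautions: the equivalence ``$\chi\in L^2$ and $H_0\chi\in L^2$ distributionally $\Rightarrow\chi\in D(H_0)$'' requires that the maximal distributional domain coincide with the self-adjoint domain (true for the Laplacian on a compact manifold and for the Schr\"odinger operators considered here, but worth stating); the resolvent-identity version of your argument avoids this issue entirely and also needs the strict interlacing $E_{k-1}<E_k^*<E_k$ to guarantee $E_k^*\in\rho(H_0)$, which Proposition \ref{prop21} supplies in the generic case $\phi_k(a)\neq 0$.
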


\begin{proof}
The difference in the Green's functions can be written explicitly as follows,
\begin{eqnarray}
   \xi(x)=G_0(x, a|E_k^*)-G_0(x,a|E_l^*)=(E_k^*-E_l^*)\sum_{n=0}^\infty {\phi_n(x)\phi_n(a)\over (E_n-E_k^*)(E_n-E_l^*)} \;.
\end{eqnarray}
Suppose $E_k^*>E_l^*$ and since $E_n\to \infty$ as $n\to \infty$, monotonously, we choose $N_*$ such that $E_n> 3E_k^*$ for $n\geq N_*$. This implies that 
$E_n-E_k^*> {1\over 2} (E_n+E_k^*)$. 
Let us compute formally
$ ||H_0 \xi||^2$:
\begin{equation}
    \int_{\mathcal M} d\mu(x) | (H_0\xi)(x)|^2= (E_k^*-E_l^*)^2 \sum_{n=0}^\infty { E_n^2 |\phi_n(a)|^2\over (E_n-E_k^*)^2(E_n-E_l)^2}
\; .\end{equation}
We split the sum into two parts
\begin{eqnarray}
||H_0\xi||^2&=& (E_k^*-E_l^*)^2 \left(\sum_{n=0}^{N_*} {E_n^2|\phi_n(a)|^2\over (E_n-E_k^*)^2(E_n-E_l^*)^2}+\sum_{n={N_*}}^\infty {E_n^2 |\phi(a)|^2\over (E_n-E_k^*)^2(E_n-E_l^*)^2} \right) \nonumber\\
        &<& (E_k^*-E_l^*)^2 \left( \sum_{n=0}^{N_*} {E_n^2|\phi_n(a)|^2\over (E_n-E_k^*)^2(E_n-E_l^*)^2}+\sum_{n={N_*}}^\infty {E_n^2 |\phi_n(a)|^2\over (E_n-E_k^*)^4} \right) \nonumber \\
        &<& (E_k^*-E_l^*)^2 \left( \sum_{n=0}^{N_*} {E_n^2|\phi_n(a)|^2\over (E_n-E_k^*)^2(E_n-E_l^*)^2}+2\sum_{n={N_*}}^\infty {E_n^2 |\phi_n(a)|^2\over (E_n+E_k^*)^4} \right) \;.      \label{H0norm}
\end{eqnarray}
Use now $E_n^2=(E_n+E_k^*)^2-2(E_n+E_k^*) E_{k}^{*}+(E_k^*)^2$, to reexpress the last part as
\begin{eqnarray}
    \sum_{n={N_*}}^\infty {E_n^2 |\phi_n(a)|^2\over (E_n+E_k^*)^4}&=&\sum_{n={N_*}}^\infty { |\phi_n(a)|^2\over (E_n+E_k^*)^2}-2 E_{k}^{*}\sum_{n={N_*}}^\infty { |\phi_n(a)|^2\over (E_n+E_k^*)^3}
+(E_k^*)^2\sum_{n={N_*}}^\infty { |\phi_n(a)|^2\over (E_n+E_k^*)^4} \;.
\end{eqnarray}
Removing the negative term (as all its summands are positive it gives an upper bound to our expression) and adding the missing terms in the sums so as to turn them into the sum over from $n=0$ to $n=\infty$, we find an upper bound for the last term in (\ref{H0norm}): 
\begin{eqnarray}
    \sum_{n={N_*}}^\infty {E_n^2 |\phi_n(a)|^2\over (E_n+E_k^*)^4}&<&\sum_{n=0}^\infty { |\phi_n(a)|^2\over (E_n+E_k^*)^2}+(E_k^*)^2\sum_{0}^\infty { |\phi_n(a)|^2\over (E_n+E_k^*)^4}\nonumber\\
    &<& \int_0^\infty t \; K_t(a,a) e^{-{E_k^*}t} \; d t +{E_k^*}^2\int^\infty_0 t^3 \; K_t(a,a) e^{-{E_k^*}t} \; d t \;,
\end{eqnarray}
where we have used $\frac{1}{(E_n+E_{k}^*)^k} = \int_{0}^{\infty} t^{k-1} e^{-t(E_n + E_{k}^{*})} \; dt$ and the eigenfunction expansion of the heat kernel $K_t(x,y)= \sum_{n=0}^{\infty} \phi_n(x) \overline{\phi_n(y)} e^{-tE_{n}}$. Using the upper bound for the diagonal heat kernel on compact Riemannian manifolds $K_t(a,a) \leq \frac{1}{V(\mathcal{M})} + C t^{-D/2}$, where $V(\mathcal{M})$ is the volume of the manifold and $C$ is a positive constant depending on the geometry of the manifold such as the bounds on Ricci curvature given by (\ref{ricci}), it is easy to see that all the integrals above are finite. The same bound has been also used for showing the lower bound for the ground state energy of a particle interacting with finitely many delta interactions on compact manifold \cite{ErmanTurgut}. Moreover, since the first term of the sum being over a finite number of indices in (\ref{H0norm}) is finite, we show that $||H_0 \xi ||$ is finite. In other words, $\xi$ is in the domain of $H_0$.  
\end{proof}

\begin{myremark}
The explicit realization above gives us some insight about the self-adjoint extension perspective as well. Note that the 
functions $G_0(x, a|E_k^*)$'s are not in the domain of the initial Hamiltonian $H_0$, nevertheless we have shown that their difference $G_0(x, a|E_k^*)-G_0(x,a|E_l^*)$ are in the domain 
of $H_0$, hence we need only one of them to be added to the initial domain $D(H_0)$.
\end{myremark}

\begin{myremark}
It is possible to provide the upper and lower bounds for  these new eigenfunctions on manifolds, which charaterize the singular behavior as $x\to a$. Considering manifolds with Ricci bounded from below by the metric,  for $d=3$ we have,
$$
-C_0+{C_1\over d_g(x,a)}\leq G_0(x,a|E_k^*)\leq {C_2\over d_g(x,a)}.
$$
When $d=2$, for compact manifolds Ricci boundedness is automatically true, and we get a logarithmic bound,
$$
-C_0+C_1\ln(d_g(x,a))\leq G_0(x,a|E_k^*)\leq C_0+C_2\ln(d_g(x,a)).
$$
For both estimates, the constants $C_0, C_1, C_2$ depend only on the  dimension and geometric data such as the volume, diameter and the value of the lower bound constant on the Ricci curvature (however in a physical problem there are also $\hbar^2$ and $m$ multiplicative factors appearing in these bounds). \end{myremark}

\section{Application: Sudden Approximation in the Case of Time-dependent Center}

    We note that the above explicit expression for the wave functions can be used for an interesting application; suppose that we initially have our delta-modification at point $a$ and very rapidly we move this modification to another point $b$. We can use the usual sudden perturbation approach to this problem just as in the conventional case.

We briefly elaborate on this idea, let us suppose that initially the system is prepared in the eigenstate $G_0(x, a|E_k^*(a))$,  $E_k^*(a)$ referring to the energy for this case. A sudden perturbation means that the system has no time to readjust itself, so the wave function remains as it is, but should be decomposed in terms of the new eigenbasis $G_0(x,b|E_m^*(b))$'s to calculate the probability of finding the system in the new energy eigenstate $E_m^*(b)$. This means that the conditional probability of finding the system in $E_m^*(b)$, given that it was in $E_k^*(a)$ initially, is
\begin{eqnarray}
    p(m,b|k,a)&=&\left[ {d\Phi(E|a)\over dE}\Big|_{E_k^*} {d\Phi(E|b)\over dE}\Big|_{E_m^*}\right]^{-1}\left|\int_{\mathcal M} d\mu(x) \overline{G_0(x, b|E_m^*(b))}G_0(x, a|E_k^*(a))\right|^2 \nonumber\\
    &=& \left[ {d\Phi(E|a)\over dE}\Big|_{E_k^*} {d\Phi(E|b)\over dE}\Big|_{E_m^*}\right]^{-1}\left | {G_0(a,b|E_m^*(b))- G_0(a,b|E_k^*(a))\over E_m^*(b)-E_k^*(a)}\right|^2\, ,\nonumber
    \end{eqnarray}
where the energy eigenstates $E_m^*(b)$
are found from the solutions of 
\begin{equation}
    \Phi(E|b)=\frac{1}{\alpha_R} -\sum_k{ |\phi_k(b)|^2(E+\mu^2)\over (E_k+\mu^2)(E_k-E)}=0,
\end{equation}
whereas $E_k^*(a)$ refers to the zeros of $\Phi(E|a)$. Incidentally, it is possible to conceive a sudden change of $a$ and $\mu_a$ to $b$ and $\mu_b$, without any difficulty. As pointed out before, one can easily generalize this idea to sudden changes of curves in three dimensions, or sudden rearrangements of multiple centers etc. The sudden approximation is typically valid if the time scale, defined by the initial energy eigenstate $E_k^*(a)$ is much larger than the time scale of the change we consider.


\begin{myremark}
The above results are independent of the chosen renormalization scheme, as shown in \cite{DuttaRoy} for the point delta interactions in two and three dimensions. The main idea of the proof for the completeness of the eigenfunctions of the Hamiltonian involving singular delta potentials here is based on the eigenfunction expansion of Green's function $G_0$ and the contour deformation described above.    
\end{myremark}

\section{Acknowledgements}
We would like to thank A. Michelangeli, M. Znojil for their interest in our works and their continual support. We also thank A. Mostafazadeh, M. Gadella, K. G. Akbas, E. Ertugrul and S. Seymen for discussions. O. T. Turgut is grateful to M. Deserno for a wonderful time in Carnegie-Mellon University where this work began.
Last, but not least we thank 
P. Kurasov for all the inspiration that led to this work.

\end{document}